\documentclass[aps,pra, nopacs,twocolumn,twoside,floatfix,a4,superscriptaddress]{revtex4}
\usepackage{amsmath, amssymb,amsthm}
\usepackage{color}
\usepackage{graphicx,epsfig}
\usepackage{times} 
\theoremstyle{plain}
\newtheorem{theorem}{Theorem}
\newtheorem{cor}{Corollary}
\theoremstyle{definition}
\newtheorem{definition}{Definition}
\newtheorem{hypothesis}{Hypothesis}
\newtheorem{lemma}{Lemma}

%
%
%


\usepackage{xy}
\xyoption{matrix}
\xyoption{frame}
\xyoption{arrow}
\xyoption{arc}

\usepackage{ifpdf}
\ifpdf
\else
\PackageWarningNoLine{Qcircuit}{Qcircuit is loading in Postscript mode.  The Xy-pic options ps and dvips will be loaded.  If you wish to use other Postscript drivers for Xy-pic, you must modify the code in Qcircuit.tex}
\xyoption{ps}
\xyoption{dvips}
\fi

\entrymodifiers={!C\entrybox}

\newcommand{\bra}[1]{{\left\langle{#1}\right\vert}}
\newcommand{\ket}[1]{{\left\vert{#1}\right\rangle}}

\newcommand{\nt}[1]{^{\otimes #1}}

\newcommand{\IQPstar}{IQP$^{*}$}

\usepackage[normalem]{ulem}
\usepackage{color}

\newcommand{\dan}[1]{{#1}}

\begin{document}

\title{Measurement-based classical computation} 

\author{Matty J. Hoban}
\affiliation{ICFO-Institut de Ci\`{e}ncies Fot\`{o}niques, Mediterranean Technology Park, E-08860 Castelldefels (Barcelona), Spain}
\author{Joel J. Wallman}
\affiliation{Centre for Engineered Quantum Systems, School of Physics, 
The University of Sydney, Sydney, NSW 2006, Australia.}
\author{Hussain Anwar}
\author{Na\"{i}ri Usher}
\affiliation{Department of Physics and Astronomy, University College London, Gower Street, London WC1E 6BT, United Kingdom.}
\author{Robert Raussendorf}
\affiliation{Department of Physics and Astronomy, University of British Columbia, Vancouver, BC V6T 1Z1, Canada}
\author{Dan E. Browne}
\affiliation{Department of Physics and Astronomy, University College London, Gower Street, London WC1E 6BT, United Kingdom.}

\date{\today}

\begin{abstract}
Measurement-based quantum computation (MBQC) is a model of quantum computation, in which computation proceeds via adaptive single qubit measurements on a multi-qubit quantum state. It is computationally equivalent to the circuit model. Unlike the circuit model, however, its classical analog is little studied. Here we present a classical analog of MBQC whose computational complexity presents a rich structure. To do so, we identify uniform families of quantum computations (refining the circuits introduced by Bremner, Jozsa and Shepherd in \textit{Proc. R. Soc. A} \textbf{467}, 459 (2011)) whose output is likely hard to exactly simulate (sample) classically.
We demonstrate that these circuit families can be efficiently implemented in the MBQC model without adaptive measurement, and thus can be achieved in a classical analog of MBQC whose resource state is a probability distribution which has been created quantum mechanically. 
Such states (by definition) violate no Bell inequality, but, if widely held beliefs about computational complexity are true, they nevertheless exhibit non-classicality when used as a computational resource---an imprint of their quantum origin. 
\end{abstract}

\maketitle

There is a strong belief that quantum computers can efficiently perform certain tasks that cannot be performed efficiently on a classical computer, such as integer factorization~\cite{Shor}. One of the central questions of quantum information theory is to better understand which aspects of quantum evolution are efficiently classically simulatable and which are not \cite{vandennest1}. One important aspect of this investigation has been to learn when a computational model cannot possess a super-classical speed-up, by showing that it can be simulated efficiently on a classical computer. For example, Jozsa and Linden showed that in pure state circuit-model quantum computation, restricting the multi-partite entanglement in certain ways renders the model classically efficiently simulatable \cite{jozsa}. In contrast, a number of striking recent results \cite{terhaldivincenzo,IQP2,aa} have given rigorous evidence that certain models of quantum computation (that have circuits with unrestricted entanglement) are unlikely to admit an efficient classical simulation.

%
A distinct way to question the role of entanglement in quantum computing is to consider it within the model of Measurement-based Quantum Computation (MBQC) \cite{mbqc}. In MBQC, computation proceeds via a sequence of single-site measurements on a (usually entangled) many qubit resource state. Certain entangled resource states, such as the cluster state \cite{briegel1}, are known as universal resources since they enable universal quantum computation in this model.  It has been shown that the computational properties of a resource state  can be linked to its entanglement properties \cite{mbqcresources}.
%
%
 Here, we consider the computations that can be performed in the MBQC framework when \textit{no entanglement} is present in the resource state by developing and studying a classical analog of MBQC.

It is important to define clearly what we mean by \textit{non-classical} in the context of computation. In this paper, we denote a standard classical computing device (for a formal definition see Appendix \ref{appsec0}) as a classical computer that has access to  uniformly random bits (i.e. as is used to define the complexity class BPP). We then define non-classical computation as any family of computations which cannot be achieved efficiently (i.e. in polynomial time) with such a device.

The connections between MBQC and classical computation was studied from one perspective in \cite{andersbrowne}, where it was shown that casting classical computations within the MBQC model illuminated a close connection between MBQC and GHZ-type paradoxes (see also \cite{raussenGHZ}). Here we take an alternative approach. MBQC can be split into three components: a multi-qubit resource state; adaptive local measurements; and the classical side-computation which processes input and output and allows adaptive measurement \cite{RBB}. In a full quantum realisation of MBQC, the first two components are quantum, and the latter classical. In this paper we consider the consequences of making all three components classical. 

What is the classical analog of an entangled resource state? When we measure a quantum state, the output is usually random. Moreover, we can only make a measurement once---in entangled-state MBQC, measurement always changes the state. Due to the single-use property of the entangled resource states MBQC is often called the ``one-way quantum computer" \cite{mbqc}. The classical object which shares these properties is a \textit{single sample} from a multi-bit probability distribution. Like a set of single qubit measurements on an entangled state, it returns a random bit-string, similarly it supplies this only once. There are significant fundamental differences between a classical sample and an entangled state, however, both can be considered as resources in an MBQC-like framework. In this paper we  define \textit{measurement-based classical computation} (MBCC) as a model of computation consisting of polling a single sample from a multi-bit probability distribution and performing classical post-processing on these bits. Furthermore, we restrict classical post-processing to the sub-class of computations utilised in cluster state MBQC \cite{RBB}, linear computations (generated by XOR and NOT-gates alone).

Two of us showed in \cite{matty} that, under this restriction, if an MBQC resource violates no Bell inequality then no non-linear computation can be achieved. The computation is restricted to convex combinations of linear functions of the input bits. The expressiveness of MBCC (the types of computations it can perform), in which, as a classical model, no Bell inequality can be violated,  is therefore limited in the same way. Can we then prove that MBCC can be simulated efficiently by a classical computer?

\dan{The resource in MBCC is a classical multi-bit probability distribution. Such distributions are of exponential size and include distributions unlikely to by efficiently realisable even by quantum resources. We thus say that an $n$-bit distribution is \textit{efficiently quantum preparable} when there exists a quantum circuit with a  polynomial in $n$ description, upon which the output of single qubit measurements can can prepare the distribution exactly.}

In this paper, we give strong evidence that MBCC with an efficiently quantumly preparable resource can be computationally non-classical. More precisely we show that:
\begin{theorem}\label{newtheorem1} There exist uniform families of MBCC computations \dan{with efficiently quantum preparable resources} which cannot be efficiently exactly simulated via a standard classical computing device unless the polynomial hierarchy collapses to the third level.   \end{theorem}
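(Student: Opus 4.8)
The plan is to refine the Bremner, Jozsa and Shepherd (BJS) construction into a uniform family of commuting (``instantaneous'') quantum circuits --- call it \IQPstar{} --- for which (i) exact classical sampling collapses the polynomial hierarchy, and (ii) the output distribution is reproduced by an \emph{adaptivity-free} measurement-based protocol with linear post-processing, i.e.\ by an MBCC computation with an efficiently quantum preparable resource. Concretely, an \IQPstar{} circuit acts on $\ket{+}\nt{n}$, applies a uniformly generated list of commuting diagonal gates $\exp\!\big(i\theta\prod_{j\in S}Z_j\big)$ with $\theta$ drawn from a fixed finite set of non-Clifford angles (e.g.\ $\pi/4$) and $S$ a subset of the qubits, and finally measures every qubit in the $X$ basis; uniformity means that the list of pairs $(\theta,S)$ is produced by a classical polynomial-time machine on input $1^n$.

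For the hardness half I would run the standard postselection argument of BJS. \IQPstar{} gates together with postselection are universal for quantum computation: a graph state is built using only the diagonal gate $\mathrm{CZ}$ (up to local diagonal corrections), arbitrary single-qubit rotations are injected by diagonal gates on pendant qubits, and every non-$X$ measurement outcome is postselected to a fixed value, leaving precisely a ``diagonal-then-measure-in-$X$'' circuit. Hence postselected \IQPstar{} contains $\mathrm{PostBQP}=\mathrm{PP}$, so the output probabilities of \IQPstar{} circuits encode $\#\mathrm{P}$-hard quantities. If a standard classical computing device could sample the \IQPstar{} output distribution exactly in polynomial time, then conditioning its output would place these postselected computations inside $\mathrm{PostBPP}$; applying Stockmeyer's approximate counting to the sampler and invoking Toda's theorem then forces $\mathrm{PH}$ to collapse to its third level, exactly as in BJS and Aaronson--Arkhipov.

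For the measurement-based half I would, for each \IQPstar{} circuit, build a graph state $\ket{G}$ on $\mathrm{poly}(n)$ qubits --- one qubit per logical wire, plus one ancilla per diagonal gate $\exp(i\theta\prod_{j\in S}Z_j)$ joined to the wires in $S$ --- and prescribe a \emph{fixed}, non-adaptive measurement of every qubit: the logical qubits in the $X$ basis, each gate-ancilla in the basis $\{\ket{0}\pm e^{-i\theta}\ket{1}\}$. No adaptation is needed precisely because the circuit is temporally unstructured: the Pauli byproduct operators that would normally force an adapted later measurement are $X$-type corrections feeding into a subsequent gate angle, but here every gate commutes with every other and all of them precede only the final fixed-basis read-out, so each byproduct propagates to contribute merely a linear (XOR) correction to the output bits. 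That correction is undone by a classical decoder built from XOR and NOT gates alone --- exactly the post-processing permitted in MBCC. The MBCC resource is then the joint distribution of the fixed-basis outcomes on $\ket{G}$, which is efficiently quantum preparable by construction; and since the \IQPstar{} family is uniform, the assignment $n\mapsto(\ket{G},\text{measurement bases},\text{linear decoder})$ is itself uniformly generated.

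Combining the two halves gives Theorem~\ref{newtheorem1}: an efficient classical simulation of this uniform MBCC family would in particular sample the \IQPstar{} output distribution exactly in classical polynomial time, forcing the collapse above. I expect the main obstacle to be the non-adaptivity claim: one must verify that, for the chosen graph and gate gadgets, every Pauli byproduct commutes through all remaining (diagonal) gates up to signs affecting only the final $X$-basis outcomes, so that no measurement angle ever has to be corrected mid-computation; a secondary tension is keeping \IQPstar{} at once powerful enough for the $\mathrm{PostBQP}$ lower bound and structured enough --- a diagonal, graph-state-native gate set --- for this clean translation to go through.
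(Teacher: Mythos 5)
Your overall architecture is the same as the paper's: a uniformity-restricted IQP-type family whose postselected version contains $\mathrm{PostBQP}=\mathrm{PP}$ (so exact classical sampling collapses PH to the third level, as in BJS), a non-adaptive MBQC implementation with only XOR/NOT post-processing, and the fixed-basis outcome distribution taken as the MBCC resource. (Two cosmetic points: Stockmeyer counting is not needed for the \emph{exact}-simulation case — the BJS route $\mathrm{PostBPP}\subseteq$ third level suffices; and you never say how a classical input enters a circuit family that depends only on $n$ — in the paper this is Eq.~\eqref{mainequation}, where the $x$-dependence becomes a bit-flip of the outputs, again linear post-processing.)

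The genuine gap is in your measurement-based half: the gate gadget as specified does not work, and the byproduct argument offered for non-adaptivity is wrong. If the gate-ancilla (prepared in $\ket{+}$ and joined by $CZ$ to the wires in $S$) is measured in the equatorial basis $\{\ket{0}\pm e^{-i\theta}\ket{1}\}$, the two outcomes effect the \emph{non-unitary} operators $\tfrac{1}{2}\left(\openone \pm e^{i\theta}Z[S]\right)$ on the register, not $e^{i\theta Z[S]}$ up to a Pauli. Concretely, for a single gate $e^{i\theta Z\otimes Z}$ on $\ket{+}\ket{+}$ followed by $X$-measurements the target statistics are perfectly correlated with weights $\cos^{2}\theta$ and $\sin^{2}\theta$, whereas your gadget yields perfectly correlated but \emph{uniform} wire outcomes, independent of $\theta$ and of the ancilla bit, so no XOR/NOT decoder can recover the IQP distribution. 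The correct gadget (the paper's Fig.~\ref{fig:egIQP}b, following Browne--Briegel) measures the ancilla in the eigenbasis of $U_{X}(-\theta)ZU_{X}(\theta)$, a Y--Z-plane basis; outcome $m$ then implements $Z[S]^{m}e^{i\theta Z[S]}$. This also shows why your justification for non-adaptivity fails: you call the byproducts ``X-type'' and appeal to the gates commuting with one another, but an $X$-type byproduct anticommutes with $Z[S']$ and would flip the sign of every subsequent overlapping gate angle — mutual commutativity of the gates is irrelevant. What actually saves you (and is the content of the paper's Lemma~\ref{newnewlemma}) is that the byproducts of the correct gadget are $Z$-type, hence commute with all the diagonal gates and merely flip the final $X$-basis outcomes, which is exactly the linear correction MBCC permits. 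With that gadget and byproduct analysis substituted, your argument goes through as in the paper.
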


By standard classical computing device we mean a classical Turing machine whose sole random element is a supply of uniformly random bits.  This is a probabilistic Turing machine, and is used to define complexity classes BPP and PP. MBCC is also a fully classical computational model, \dan{but crucially the multi-bit probability distribution may have an (efficient) non-classical preparation.}

The Polynomial Hierarchy is a family of classes in computational complexity theory  \cite{papabook}. It is believed, although not proven, that this family of classes is distinct. Aaronson and Arkhipov (AA) recently called this a ``generic, foundational'' assumption of computer science  \cite{aa}, and this has been used to provide strong evidence that universal quantum computers \cite{terhaldivincenzo}, and certain restricted sub-classes of quantum computers \cite{IQP2} and quantum processes  \cite{aa} are hard to exactly simulate on a classical computer. 

\begin{hypothesis}  The third level of the polynomial hierarchy is strictly smaller than at least one other level in the hierarchy.
\end{hypothesis}
     Under the assumption of Hypothesis 1, Bremner, Jozsa and Shepherd (BJS) showed that uniform families of a very restricted family of quantum circuits, Instantaneous Quantum Polytime (IQP) circuits, could not be exactly efficiently simulated on a standard classical computer, where by simulate we mean that the classical devices outputs a sample from an identical distribution to the simulated quantum circuit (\textit{weak simulation} in Jozsa and Van Den Nest's classification \cite{jozsavdn}).

Our technical results include a strengthening of BJS's result by introducing a new and much stricter uniformity condition defining uniform families of circuits which we call (IQP$^*$). We show that \IQPstar circuits also cannot be classically simulated unless Hypothesis 1 is violated.
We then show that \IQPstar circuits can be implemented in cluster-state MBQC using fixed-basis measurements. We thus demonstrate that the same computations can be implemented in MBCC, the classical analog of MBQC introduced above.




We begin by defining IQP (Instantaneous Quantum Polytime) circuits \cite{IQP1}, introduced by BJS, which will play a central role in our argument. 
\begin{definition}
An \textit{IQP circuit} with classical input bit string $x$ of size $n$ acting on $q\geq n$ qubits consists of:
\begin{enumerate}
\item a quantum register prepared in the input state $\ket{x}\ket{0}\nt{q-n}$; and
\item the application of a unitary operator $U$ to the register, where $U$ is diagonal with respect to the eigenbasis of Pauli-$X$ operators.
\end{enumerate}
We denote the output of this computation, obtained via computational basis measurements on every qubit,
 by the $q$-bit string
$m$, whose $j$th element $m_j\in\{0,1\}$ is the outcome of the computational basis measurement on the 
$j$th qubit. 
\end{definition}

An example of an IQP circuit is illustrated in Fig.~\ref{fig:egIQP}a. Such circuits 
are called \emph{instantaneous} because $D$ can be decomposed into a product of commuting gates, which can thus be applied in any order (or simultaneously)~\cite{IQP1}.

\begin{figure}
\centering
	\includegraphics[width=0.5\textwidth]{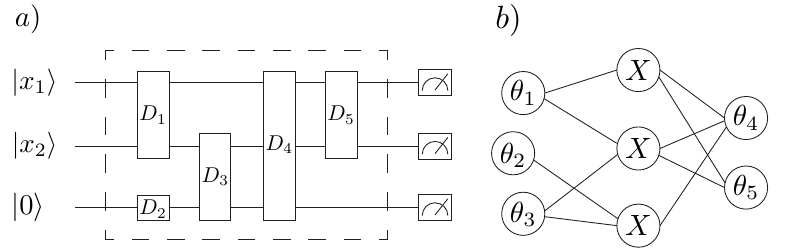}
	\caption{a) Standard form of an IQP circuit, where each gate is diagonal in the Pauli-X basis and $(x_1,x_2)$ 
	is the two bit input string. All measurements are in the computational (Pauli-Z) basis. The boxed gates give 
	the unitary $D$. b) A MBQC implementation of the circuit in a), where each circle 
	represents a qubit prepared in the state $\ket{+}$ and edges between circles represent the application of a 
	controlled-$Z$ gate. The contents of the circles represent the basis in which the corresponding qubit is 
	measured, where $X$ represents the Pauli-X basis, and $\theta_{j}$ represents the basis $U_{X}(-\theta_{j})ZU_{X}(\theta_{j})$,
	where $U_{X}(\theta_{j})$ is a rotation by $\theta_{j}$ about the Pauli-X axis. 
	The angles $\theta_j$ are in one-to-one correspondence with the $\theta_{z}$ in the representation of 
	$D_{j}$ in Eq. \eqref{Dgate}. All of these measurements can be implemented simultaneously (non-adaptively) in MBQC.}
	\label{fig:egIQP}
\end{figure}

When studying the computational power of families of circuits, it is often useful to ensure that unreasonable computational power is not hidden in the description of the circuits themselves. This can be ensured via a \emph{uniformity condition} which ensures that a description of each circuit in the family can be (classically) efficiently generated. While BJS introduce a uniformity condition in Ref. \cite{IQP2}, we adopt a different one here, and we denote the set of uniform circuit families obtained under this condition by IQP$^*$.

\begin{definition}
An {\IQPstar} circuit family is a family of IQP circuits, with input $x$ and input size $n=|x|$, followed by computational basis measurements on every qubit, 
such that the number of qubits $q$ is polynomial in $n$, and where the unitary operator $U_n$ (which has an explicit $n$-dependence) is a $\mbox{poly}(n)$ product of gates of the form
\begin{equation}\label{Dgate}
D(\theta_{z},z)=e^{i\theta_{z}X[z]}{,}
\end{equation}
where each angle $\theta_{z}\in(0,2\pi]$ has a description polynomial-size in $n$, $z$ is a $q$-bit string, and we introduce the notation $X[z]=\bigotimes_{j}X^{z_{j}}${,} where $z_{j}$ is the $j$th bit of $z$. E.g. $D(\theta_{{101}},101)=\exp[i\theta_{{101}} (X\otimes \openone\otimes X)]$. 
\end{definition}

In other words, the description of every member of an IQP$^*$ circuit family
is a polynomial list $L_{n}$ of $q$-bit strings $z$ and corresponding angles $\theta_{z}$. This list then defines the circuit for input size $n$. If we adopt the notational shorthand that $\theta_{z}=0$ for all bit strings $z$ not in $L_{n}$, then the unitary transformation for the circuit $U_n$ is given by
\begin{equation}\label{Ddef}
U_n=\prod_{z\in Z_{2}^{q}} D(\theta_{z},z).
\end{equation}
Operators of this form have a useful symmetry in their matrix elements, which we will exploit below, namely,
\begin{equation}\label{useful}
\bra{w\oplus y}D_n\ket{w}=\bra{y}D_n\ket{0}
\end{equation}
for all bit strings $w$ and $y$, with $\ket{0}{\equiv}\ket{0}^{\otimes q}$ and where $\oplus$ represents a bit-wise sum modulo 2.

This is different from the uniformity condition introduced by BJS \cite{IQP2} in some significant ways. Most importantly, BJS' uniformity condition allows the circuit to depend on individual values of input string $x$, rather than (the more common choice of) the length of $x$. This means that the circuit construction itself can play a very siginificant role in the computation, for example evaluating an arbitrary polynomial-sized classical circuit. In contrast, due to their much weaker ``pre-computation'' stage,  {\IQPstar} circuit families cannot even achieve a single non-linear logic Boolean function, such as AND. Lemma 1 below is thus a considerable strengthening of the theorems in \cite{IQP2}. 

\begin{lemma}\label{lemma1}
\textit{The output probability distributions generated by IQP$^*$ circuit families cannot
be efficiently and exactly simulated on a standard classical computing device unless Hypothesis 1 is false.}
\end{lemma}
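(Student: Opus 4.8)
The plan is to adapt the postselection argument of Bremner, Jozsa and Shepherd (BJS) to the stricter \IQPstar\ uniformity. I would first introduce the class $\mathrm{PostIQP}^{*}$ of languages $L$ decidable by an \IQPstar\ family in which, for each input $x$, some output qubits form a postselection register and one output qubit is the answer: conditioned on the postselection register reading $0\cdots 0$ (an event of nonzero probability), the answer equals $\chi_L(x)$ with probability bounded away from $1/2$. The heart of the proof is the identity $\mathrm{PostIQP}^{*}=\mathrm{PostBQP}$. The inclusion $\subseteq$ is trivial. For $\supseteq$ one recycles the BJS gadgets that simulate an arbitrary postselected quantum circuit by a postselected IQP circuit, each Hadamard and entangling gate of the target being replaced by a short $X$-diagonal ``$X$-program'' acting on postselected ancillas.

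The obstacle specific to \IQPstar\ is that BJS let the IQP circuit depend on the individual input and therefore simply hard-wire the target circuit, whereas an \IQPstar\ family must present the \emph{same} list of pairs $(z,\theta_z)$ for every input of a given length. I would circumvent this by applying the BJS construction not to the target circuit but to a fixed, explicitly constructible \emph{universal} postselected quantum circuit $\mathcal{U}_n$ of size $\mathrm{poly}(n)$, which takes as classical input a description of an arbitrary $\mathrm{poly}(n)$-gate circuit $C$ together with an input $w$ and outputs $C(w)$. The resulting \IQPstar\ circuit then depends on $n$ only, while the target computation enters solely through the string $x=(C,w)$ loaded into the register as $\ket{x}$ --- no classical pre-processing of $x$ is performed, consistent with \IQPstar\ being unable to compute even \textsc{and}. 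The remaining work is bookkeeping: checking that every gadget, and $\mathcal{U}_n$ itself, is a $\mathrm{poly}(n)$-long product of operators $D(\theta_z,z)=e^{i\theta_z X[z]}$ with polynomial-size angle descriptions, so the whole object is a genuine \IQPstar\ family.

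Granting $\mathrm{PostIQP}^{*}=\mathrm{PostBQP}$ and Aaronson's theorem $\mathrm{PostBQP}=\mathrm{PP}$, the rest is standard. Suppose some standard classical device samples in polynomial time from the exact output distribution of every \IQPstar\ family. Then every $\mathrm{PostIQP}^{*}$ computation is a $\mathrm{PostBPP}$ computation: run the sampler on $x$, postselect on its postselection register reading $0\cdots 0$, output the answer bit. Hence
\[
\mathrm{PP}=\mathrm{PostBQP}=\mathrm{PostIQP}^{*}\subseteq\mathrm{PostBPP}\subseteq\Sigma_3^{p},
\]
the last inclusion being a known bound on $\mathrm{PostBPP}$. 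Since $\mathrm{PP}$ is closed under complement this gives $\mathrm{PP}\subseteq\Sigma_3^{p}\cap\Pi_3^{p}$, whence $\mathrm{P}^{\mathrm{PP}}\subseteq\Sigma_3^{p}$ (a polynomial-time computation with an oracle in $\Sigma_3^{p}\cap\Pi_3^{p}$ is captured by guessing the oracle answers together with $\Sigma_3^{p}$-witnesses for each, then verifying). By Toda's theorem $\mathrm{PH}\subseteq\mathrm{P}^{\mathrm{PP}}$, so $\mathrm{PH}=\Sigma_3^{p}$, contradicting Hypothesis 1.

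I expect the first step --- establishing $\mathrm{PostIQP}^{*}=\mathrm{PostBQP}$ despite the crippled pre-computation stage --- to be the main difficulty; everything downstream (Aaronson's theorem, the containment of $\mathrm{PostBPP}$ in the third level, Toda's theorem) is off the shelf. The conceptual point to get right is that postselection compensates for the loss of adaptivity and of classical pre-processing \emph{provided} the description of the target computation is pushed entirely out of the circuit and into the input register, and that this ``universalisation'' can be carried out within the rigid $X$-diagonal, non-adaptive \IQPstar\ format.
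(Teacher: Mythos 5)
You follow the same skeleton as the paper: define $\textsc{Post\IQPstar}$, prove $\textsc{Post\IQPstar}=\textsc{PostBQP}=\textsc{PP}$, and then derive the collapse from an exact classical sampler via $\textsc{PP}\subseteq\textsc{PostBPP}\subseteq\Sigma_3^p$ and Toda's theorem; the paper delegates that last stage to Theorem 3.2 and Corollary 3.3 of BJS, and your explicit chain (including $\textrm{P}^{\textsc{PP}}\subseteq\Sigma_3^p$ from $\textsc{PP}\subseteq\Sigma_3^p\cap\Pi_3^p$) is a faithful rendering of it. The genuine difference is how $\textsc{PostBQP}\subseteq\textsc{Post\IQPstar}$ is obtained. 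The paper goes through MBQC: a uniform BQP circuit is realized on a graph state whose structure and measurement bases depend only on $n$; postselection removes the adaptivity; the $x$-dependence is absorbed into preparing $\ket{+}$ or $\ket{-}$ (a $Z^{x_j}$ that commutes with the $CZ$ gates); and a global Hadamard conjugation turns the resulting non-adaptive pattern into an \IQPstar{} circuit with input $\ket{x}\ket{0}^{\otimes q-n}$. You instead re-run the BJS postselected Hadamard gadgets; since that gadget is one-bit teleportation, the two constructions are the same mechanism in different language, and yours can be completed --- the bookkeeping you defer is precisely the Hadamard-conjugation step just described, plus one extra gadget so that the output qubit, which the simulated circuit measures in the computational basis, is also reduced to an $X$-type measurement before conjugation.

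Two points in your write-up need repair, though neither is fatal. The universal-circuit layer is unnecessary and, as phrased, clashes with the input convention of $\textsc{Post\IQPstar}$: the family witnessing $L\in\textsc{Post\IQPstar}$ must receive the instance $x$ itself in its register, not a pair $(C,w)$. Since a $\textsc{PostBQP}$ family $\{C_n\}$ is already uniform in $n$ with $x$ entering only through the register, you can gadgetize $C_n$ directly (the \IQPstar{} list $L_n$ is allowed to depend on $n$ exactly as $C_n$ does). Alternatively, if you keep the universal circuit, hard-wire the description of $C_n$ by flipping ancillas with the legal \IQPstar{} gate $e^{i(\pi/2)X}$, or settle for placing a single PP-complete language in $\textsc{Post\IQPstar}$ and invoke closure of $\textsc{PostBPP}$ under polynomial-time many-one reductions in the final step. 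With either fix your argument goes through and is equivalent in strength to the paper's.
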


The full proof of Lemma \ref{lemma1} is presented in Appendix \ref{appsec1}. The technical definition of a classical computing device is provided in Appendix \ref{appsec0} along with other useful notions from computational complexity theory. To summarize the proof for readers familiar with Ref. \cite{IQP2}, under postselection of measurement outcomes of a subset of qubits, the families of IQP$^{*}$ circuits can be mapped to general quantum circuits satisfying the standard uniformity condition for the complexity class BQP. We then utilize a similar proof technique to  Ref. \cite{IQP2}. Importantly, post-selection allows \IQPstar circuits to implement algorithms for hard problems (hard for complexity class PP) \cite{aaronson3}.

As in Ref. \cite{IQP2}, Lemma \ref{lemma1} may be generalized to include multiplicative error up to a certain factor on the individual probabilities. However, we shall not consider such multiplicative error here. We discuss the issue of approximate simulations and finite numbers of samples at the end of this paper. Lemma \ref{lemma1} presents strong evidence that IQP$^*$ circuit families may not be efficiently and exactly simulated on a classical computer. 

Before proceeding to our main result, we note a related phenomenon in Corollary~\ref{theorem2}, namely that there exist efficiently-preparable families of quantum states for whom the statistics of computational basis measurements are unlikely to be efficiently and exactly simulated on a classical computer.

\begin{definition}
An IQP$^*$ \textit{zero-input state family} is the set of quantum states created by an IQP$^*$ circuit family, when the input is set to the all zeros string $0\ldots0$.
\end{definition}

\begin{cor}\label{theorem2}
The statistics of computational basis measurements on \emph{IQP$^*$} zero-input 
state families cannot be efficiently and exactly simulated on a standard classical computer unless Hypothesis 1 is false.\end{cor}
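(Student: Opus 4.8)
The plan is to reduce the simulation of computational-basis statistics on an \IQPstar{} zero-input state family to the simulation of the full \IQPstar{} circuit family, and then invoke Lemma \ref{lemma1}. The essential observation is that in an \IQPstar{} circuit family the unitary $U_n = D_n$ depends only on the input \emph{length} $n$ and not on the input string $x$; the string $x$ enters only through the initial computational-basis product state $\ket{\tilde x} \defeq \ket{x}\ket{0}^{\otimes(q-n)}$. Hence the output distribution of the \IQPstar{} circuit on input $x$ is $P_x(m) = |\bra{m} D_n \ket{\tilde x}|^2$, while the statistics of computational-basis measurements on the corresponding zero-input state $D_n\ket{0}^{\otimes q}$ are $P_0(m) = |\bra{m} D_n \ket{0}|^2$. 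In other words, all of the putative hardness identified in Lemma \ref{lemma1} already resides in the resource state.

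First I would apply the matrix-element symmetry \eqref{useful}: taking $w = \tilde x$ and $y = m \oplus \tilde x$ gives $\bra{m} D_n \ket{\tilde x} = \bra{m \oplus \tilde x} D_n \ket{0}$, and therefore
\begin{equation}
P_x(m) = P_0(m \oplus \tilde x)
\end{equation}
for every input $x$ with $|x|=n$. Thus the \IQPstar{} output distribution on any input $x$ is obtained from the zero-input distribution simply by XORing every sample with the fixed $q$-bit string $\tilde x$.

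Next I would use this to convert an efficient exact classical simulator for the zero-input state family into one for the full circuit family. Suppose, for contradiction, that there is a standard classical computing device that, given the polynomial-size description $L_n$ of the family, samples exactly from $P_0$ in time $\mathrm{poly}(n)$. Given an \IQPstar{} circuit instance specified by $(L_n, x)$, we call this device to obtain $m' \sim P_0$ and output $m = m' \oplus \tilde x$; since $\tilde x$ is read off from $x$ in linear time, the whole procedure is an efficient classical device whose output is distributed exactly as $P_x$. This contradicts Lemma \ref{lemma1} unless Hypothesis 1 is false, which proves the corollary.

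The only point requiring care—and the nearest thing to an obstacle—is bookkeeping about the data the simulator is handed: ``simulating the statistics of computational-basis measurements on the state family'' must be read as weak (sampling) simulation that takes the efficient classical description $L_n$ of the state as input, exactly as in the setup of Lemma \ref{lemma1}; with that convention the reduction above is uniform and the argument goes through verbatim. One should also note that the reduction uses only that $\ket{\tilde x}$ is a computational-basis product state, so no hypothesis on $x$ beyond $|x|=n$ is needed, and that the argument is insensitive to which subset of qubits is declared ``output,'' since \eqref{useful} holds for the full register.
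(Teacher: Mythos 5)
Your proof is correct and follows essentially the same route as the paper: both apply the symmetry relation \eqref{useful} to obtain $\mathrm{Prob}(m|x)=\mathrm{Prob}(m\oplus\bar{x}|0)$ and then note that XORing each sample with the fixed string $\bar{x}$ is a trivial efficient classical post-processing step, so an exact simulator of the zero-input statistics would yield an exact simulator of the full \IQPstar{} circuit family, contradicting Lemma~\ref{lemma1}. Your added remarks on uniformity and the contradiction structure merely make explicit what the paper leaves implicit.
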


\begin{proof}
We use a special property of IQP circuits, namely that the output statistics of an IQP circuit with input $x$, defined according to Definition 1, may be realized by the same IQP circuit acting on the $n$-bit all-zeros string by performing some simple extra post-processing of the output bits of the measurements.

Observe that the probability that the measurement output string is a bit string $m$, given input $x$, is
\begin{align}\label{mainequation}
\textrm{Prob.}(m|x)&=|\langle m|D_n|\bar{x}\rangle|^2,	\nonumber\\
&=|\langle m\oplus \bar{x}|D_n|0\rangle|^2	,\nonumber\\
&=	\textrm{Prob.}(m\oplus\bar{x}|0)	,
\end{align}
where $\bar{x}$ is $x$ appended by $q-n$ zeros and we obtained the second line from Eq. \eqref{useful}.

Thus identical output statistics to an IQP circuit given input $x$ can be obtained via the same circuit with input $0$ and post-processing of the output bit string $m$ to string $m\oplus \bar{x}$. This post-processing comprises at most $n$ bit-flips and can be (trivially) efficiently performed on a classical device. From this, the corollary follows directly from Lemma 1.
\end{proof}

Corollary 1 is an important preface to our main result, and captures many of its features. Note that the subject of the corollary is a classical probability distribution. Even though the distribution is \textit{{classical}}, its statistics have inherited the (likely) hardness of exact simulation of the \IQPstar \textit{quantum} circuit families.

We now turn to MBQC (and then MBCC) implementations of \IQPstar circuit families. We show the general result that any MBQC which can be achieved using measurements in a fixed basis can be achieved in MBCC (though possibly requiring a resource state that cannot be efficiently generated classically).  

IQP circuits have a special form in MBQC, first derived in \cite{gbriegel} and illustrated in Fig. 1. We will show below that \IQPstar families also have the property that they can be achieved in MBQC with a non-adaptive fixed measurement basis.
 
 \begin{lemma}\label{new2lemma}
For every instance of MBQC on an $n$-qubit resource, where every measurement basis is fixed, there is a corresponding instance of MBCC with an $n$-bit resource, whose output statistics simulate it exactly.\end{lemma}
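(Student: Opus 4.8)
The plan is to observe that, once every measurement in an MBQC is performed in a fixed, pre-determined basis, the string of measurement outcomes is itself nothing more than a single sample from a fixed $n$-bit probability distribution; that distribution can then serve directly as the resource of a corresponding MBCC instance.

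First I would fix notation. Let $\rho$ be the $n$-qubit resource state, and for each qubit $j\in\{1,\dots,n\}$ let $\{\Pi_{j}^{0},\Pi_{j}^{1}\}$ be the orthogonal projectors of the fixed local measurement applied to qubit $j$. Since the measurements are non-adaptive --- the choice of $\Pi_{j}^{\cdot}$ does not depend on any other outcome --- the local projectors on distinct tensor factors commute and may be applied in any order, so the joint outcome string $s=(s_{1},\dots,s_{n})\in\{0,1\}^{n}$ is governed by the single, well-defined distribution
\begin{equation}
P(s)=\tr\!\left[\Big(\bigotimes_{j=1}^{n}\Pi_{j}^{s_{j}}\Big)\rho\right].
\end{equation}
I would take this $P$ to be the $n$-bit resource of the corresponding MBCC instance.

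Next I would match up the classical side-processing. In the given MBQC instance the reported output is computed from $s$ (and from the input, which for a single instance is fixed data) via the classical post-processing used in cluster-state MBQC, i.e.\ composition of byproduct-operator corrections; because all measurements are fixed, these corrections amount to a Boolean function generated by XOR and NOT gates alone. This is precisely the class of post-processing permitted in MBCC, so I would use the same map $f$. Polling the MBCC resource returns one sample $s\sim P$ --- reproducing the single-use character of the entangled MBQC resource --- and applying $f$ gives the output, whose distribution is the push-forward of $P$ under $f$. By construction this coincides with the output distribution of the original fixed-basis MBQC instance, where the output is likewise $f$ applied to a string distributed as $P$. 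This establishes exact simulation.

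Finally I would comment on what is and is not hard. The statement is close to definitional: the real content is the recognition that a fixed-basis measurement record is already a classical distribution, and therefore a bona fide MBCC resource. The two points needing care are (i) that non-adaptivity is exactly what makes $P$ well defined independently of the order in which the single-site measurements are performed, and (ii) that the MBQC side-processing is linear, so that it is admissible in MBCC. I would also note that $P$ generally has an exponential-size description, but this is no obstacle, since MBCC resources --- like entangled states --- are allowed to be such objects; and that if the MBQC resource state is efficiently quantum preparable and the fixed measurement bases have polynomial-size descriptions, then $P$ is efficiently quantum preparable in the sense introduced above, so the resulting MBCC family inherits this property. Together with the subsequent demonstration that IQP$^{*}$ families admit a fixed-basis cluster-state MBQC implementation, this will deliver Theorem~\ref{newtheorem1}.
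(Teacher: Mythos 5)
Your proposal is correct and follows essentially the same route as the paper: the paper phrases the key step as applying a fully dephasing channel in the fixed measurement eigenbasis (yielding a separable, discord-free state whose measurement statistics define a multi-bit distribution), which is just a repackaging of your explicit definition $P(s)=\tr\bigl[\bigl(\bigotimes_{j}\Pi_{j}^{s_{j}}\bigr)\rho\bigr]$ used as the MBCC resource. Your added remarks on linear side-processing and efficient quantum preparability are consistent with how the paper uses the lemma in proving Theorem~\ref{newtheorem1}.
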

 
 The proof of Lemma 2 follows immediately from the fact that the statistics of any projective measurement upon a quantum state relies solely on matrix elements which are diagonal with respect to the measured basis. Given the state, a fully dephasing channel in this eigenbasis may be applied, which sets all off-diagonal elements to zero. This channel will not change the statistics of the measurement which follows but outputs a state which is separable and discord-free \cite{cablermp}. Measurements on a separable and discord-free quantum state define a multi-bit probability distribution. MBCC utilising a sample of this distribution will exactly simulate an instance of the MBQC. 

The final step in the proof of Theorem 1 is to show that \IQPstar circuits can be implemented in MBQC with fixed measurements.

 \begin{lemma}\label{newnewlemma}
Given a member of an \IQPstar circuit family, there is an efficient implementation in MBQC whose measurements are fixed and non-adaptive.
\end{lemma}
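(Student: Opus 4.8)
The plan is to translate each IQP$^*$ generator $D(\theta_z,z)=\exp(i\theta_z X[z])$ directly into the measurement-based formalism by exploiting the standard MBQC gate-teleportation identities for $X$-diagonal rotations, as pioneered in \cite{gbriegel} and sketched in Fig.~1. The central point we must establish is that the non-adaptivity of the measurement pattern is not an accident of the two-qubit example in Fig.~1 but a generic feature of the $X$-diagonal structure. Concretely, I would proceed in four steps.

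\emph{Step 1: single-generator gadget.} First I recall the elementary fact that a single-qubit rotation $e^{i\theta X}$ can be realised in MBQC by attaching an ancilla qubit in $\ket{+}$, applying a controlled-$Z$, and measuring the original qubit in the basis $\{U_X(-\theta)\ket{\pm}\}$ (equivalently, the basis $U_X(-\theta)ZU_X(\theta)$ of Fig.~1). More generally, a multi-qubit generator $e^{i\theta_z X[z]}$ acting on the qubits in the support of $z$ is obtained by the same construction applied in parallel to those qubits, because $X[z]$ is a tensor product of single-qubit $X$'s (up to identities), so the rotation factorises on a suitable intermediate layer. The key observation is that the measurement angle $\theta_z$ depends only on the fixed list $L_n$ describing the circuit, never on the input $x$ nor on any earlier measurement outcome, so these measurements are fixed.

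\emph{Step 2: composition and the byproduct operators.} Next I compose the gadgets for all $\mathrm{poly}(n)$ generators of $U_n$. Each teleportation step introduces a Pauli byproduct operator ($X$ or $Z$) conditioned on the corresponding measurement outcome. In adaptive MBQC one would normally have to feed these byproducts forward, adjusting later measurement bases. Here, however, two features save us. (i) The generators $D(\theta_z,z)$ all commute with one another, so there is no ordering constraint and the byproducts from different gadgets do not get scrambled by subsequent non-commuting gates. (ii) The final measurements of the IQP$^*$ circuit are in the computational ($Z$) basis; a $Z$-basis measurement is insensitive to a preceding $Z$ byproduct, and an $X$ byproduct merely flips the outcome. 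Therefore every byproduct can be accounted for by a \emph{classical, linear} (XOR) post-processing of the final measurement string rather than by any change of measurement basis. One must also check that an $X$ byproduct $X[z']$ commuted through a later generator $e^{i\theta X[z'']}$ produces at worst a sign on $\theta$ that is again input-independent and outcome-dependent only through a linear function --- this uses exactly the symmetry of Eq.~\eqref{useful}.

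\emph{Step 3: input encoding and uniformity.} The input $x$ is encoded simply by preparing the corresponding cluster-state qubits in $\ket{0}$ rather than $\ket{+}$ (or, using Corollary~1, by running the zero-input pattern and XOR-ing $\bar x$ into the output), which again is a non-adaptive choice made before any measurement. Since the IQP$^*$ family is described by a $\mathrm{poly}(n)$-size list $L_n$ of strings $z$ and angles $\theta_z$ computable in classical polynomial time, the resulting graph state (a subgraph of the cluster/brickwork lattice of size $\mathrm{poly}(n)$), the list of fixed measurement bases, and the linear decoding map are all generated efficiently, so the construction is uniform.

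\emph{The main obstacle} I anticipate is Step 2: carefully tracking the byproduct operators through a long product of multi-qubit $X$-diagonal generators and verifying that \emph{every} correction collapses into input-independent linear classical post-processing, with no residual need to adapt a measurement angle. The commutativity of the generators and the $Z$-basis readout make this plausible, but the bookkeeping --- especially the interplay between an $X$-type byproduct and a later generator acting on overlapping qubits --- is the place where a naive argument could hide an adaptive dependence. Once that is dispatched, combining Lemma~\ref{newnewlemma} with Lemma~\ref{new2lemma} (to pass from fixed-basis MBQC to MBCC) and Lemma~\ref{lemma1} (hardness of simulating IQP$^*$) immediately yields Theorem~\ref{newtheorem1}.
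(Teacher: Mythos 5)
Your overall route is the same as the paper's---implement each generator with the Browne--Briegel gadget \cite{gbriegel}, argue the byproducts only require linear post-processing, and push the $x$-dependence into classical post-processing via Eq.~\eqref{mainequation}---but Step 1 contains a genuine error. You claim that $e^{i\theta_z X[z]}$ for $|z|>1$ ``factorises'' and can be obtained by applying the single-qubit gadget in parallel to the qubits in the support of $z$. It cannot: the exponential of a tensor product is not the tensor product of exponentials, and $e^{i\theta\, X\otimes X}$ is an entangling gate, whereas a parallel application of single-qubit gadgets produces only a product of local rotations $e^{i\theta X}\otimes e^{i\theta X}$. The correct construction (the one the paper invokes from \cite{gbriegel}) first conjugates the whole circuit by Hadamards ($H^2=\openone$), turning the input into $\ket{+}^{\otimes q}$, the gates into $Z$-diagonal gates $e^{i\theta_z Z[z]}$ as in Eq.~\eqref{DgateZ}, and the readout into $X$ measurements; inserting $CZ^2=\openone$ then exhibits a cluster state. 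Each multi-qubit generator $e^{i\theta_z Z[z]}$ is implemented by a \emph{single} ancilla coupled by $CZ$ to every qubit in the support of $z$ and measured at angle $\theta_z$---not by parallel single-qubit gadgets.

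This repair also dissolves the bookkeeping problem you flag as your ``main obstacle'' in Step 2. With the correct gadget, the byproduct operators are tensor products of $Z$ and $\openone$ only; since all remaining logical gates are $Z$-diagonal, the byproducts commute with them exactly and can be deferred to the end of the computation, where they at most flip $X$-measurement outcomes---i.e.\ linear classical post-processing, with no residual adaptivity and no sign flips on angles (your appeal to Eq.~\eqref{useful} is not needed here; that identity is what handles the input encoding, as in your Step 3 and Eq.~\eqref{mainequation}). As written, however, your proof both rests on a false factorisation and explicitly leaves the byproduct-tracking unverified, so it does not yet establish the lemma.
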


\begin{proof}
From Eq. \eqref{mainequation}, we can place all of the dependence on $x$ into classical post-processing.
For convenience, we insert $H^2=\openone$ between every state, unitary and measurement in an \IQPstar circuit to change the input state from $\ket{0}\nt{q}$ to $\ket{+}\nt{q}$, the final measurements into Pauli-$X$ measurements and to make all gates diagonal in the Pauli-$Z$ basis (rather than the Pauli-$X$ basis). Then, noting that $CZ^2 = \openone$, where $CZ$ is the controlled-phase gate, we can add $CZ$ gates after each input state, so that the \IQPstar circuit is equivalent to preparing a cluster state, applying gates that are diagonal in the Pauli-$Z$ basis and then measuring in the Pauli-$X$ basis.

In \cite{gbriegel}, it is shown that any unitary gate of the form
\begin{equation}\label{DgateZ}
D_z(\theta_{z},z)=e^{i\theta_{z}Z[z]}{,}
\end{equation}
where $Z[z]=\bigotimes_{j}Z^{z_{j}}$ and $z_{j}$ is the $j$th bit of $z$, can be achieved in MBQC with simultaneous measurements, and with byproduct operators which are a tensor product of $Z$ and $\openone$. Since the byproduct operators commute with the logical gates \eqref{DgateZ}, they can be applied at the end of the computation and measurements do not need to be adaptive. Therefore all the gates in an \IQPstar circuit can be implemented in MBQC without adaptivity.
\end{proof}

Combining Lemma \ref{new2lemma} with Lemma \ref{newnewlemma} implies that \IQPstar circuit families can be realized in MBCC, with a resource distribution which is efficiently quantum preparable. Together with Lemma \ref{lemma1} this proves Theorem \ref{newtheorem1}.

\textit{Discussion.} 
What is the relationship between MBCC and MBQC? From one perspective, MBCC can be seen as a special case of MBQC, since dephased states are a sub-set of quantum states. In both models the classical side-processing is of secondary importance (and of weak computational power) and the driver of the computation is the correlations in the measurement outcomes. In other aspects the two models are strikingly different.  In response to an earlier pre-print of this present work, Rieffel and Wiseman \cite{howard} introduced criteria  which divides sets of MBQC instances into superficially and inherently measurement based. Using their criteria, all instances of MBCC would be superficially measurement based. We agree that, while both exploit correlations, MBCC is of an intrinsically different nature to universal MBQC. Thus, it is all the more remarkable that it exhibits non-classical computational attributes due to the correlations. 

Can we identify further instances of MBCC whose efficient classical simulation would lead to hierarchy collapse, or which provide other evidence of non-classical computation? Rieffel and Wiseman \cite{howard} suggest a comparison with the output distribution of Shor's order finding algorithm. It is very likely that the algorithm cannot be efficiently simulated by a classical computing device. There are, however, a number of key differences between such a distribution and those considered here. In particular, the output distribution of the order finding algorithm is of a simple form, sharply peaked at integer multiples of the inverse order $r^{-1}$ \cite{Shor}. A single distribution, therefore, only enables one instance of order-finding, and is not a generic resource for order-finding problems. Further, given order $r$ (extractable after a few samples of the distribution) one can now efficiently (albeit non-exactly) sample from the distribution using a standard classical computer -- knowledge of $r$ has removed the  hardness of the problem. Strikingly, post-selection itself can then also be efficiently classically simulated, since, given $r$,  all non-zero outputs of the distribution can be efficiently classically computed and uniformly sampled from. The need to efficiently extract useful information from the output distributions of standard quantum algorithms is in tension with the properties needed to enable powerful computations under post-selection.


\textit{Conclusions.} 
Theorem 1 provides strong evidence that MBCC cannot be simulated efficiently on a gate-model classical computer whose sole randomness source is a supply of uniformly random bits. The MBCC model can exploit the correlations in probability distributions to achieve non-classical computation. This seems paradoxical, since MBCC is described in fully classical terms. 
The resolution of this apparent paradox is that this probability distribution can be created by quantum means. From a computational perspective, any family of distributions which do not have an efficient classical implementation have a non-classical character. For a probability distribution,  the key marker of non-classicality is typically the violation of a Bell inequality. MBCC reminds us that distributions which violate no Bell inequality may still possess a non-classical character.

From the perspective of quantum information, these results have direct implications for MBQC. They provide a concise argument that adaptive measurement is necessary for universal MBQC (otherwise MBQC and MBCC would be equivalent), and demonstrate that, in spite of this, adaptive measurement is not required for the model to exhibit characteristics of non-classical computation.

Does this work have experimental relevance? Can one \textit{verify} that the desired probability distribution has been produced? This question has been considered for boson sampling \cite{jenssample,aa} and remains a topic for future work. A small-scale demonstration experiment of the key principles of Theorem 1 would be to implement post-selection via experimental repetition and confirm a non-trivial post-selected computation.  Another important issue associated with experimental realizations is that of approximation. While the approximation model (multiplicative errors on individual probabilities) considered in Ref. \cite{IQP2} can be immediately applied here, it does not provide an analysis of the most physically relevant error model, namely, additive error on the whole probability distribution. We note that AA \cite{aa} have made progress in this direction.

The relationship between entanglement and quantum computational speed-up has been debated since the early days of quantum computing theory. These results emphasize the intricacies of this question. They illustrate that the quantum speed up in Shor's algorithm is of a qualitatively different character to the hardness of the sampling distributions in this paper, and show that even an ostensibly fully classical model, such as MBCC, can have a quantum computational character if its resource distribution is quantum mechanically generated. The relationship between quantum computational power and correlations appears even more subtle than previously thought.

\textit{Acknowledgements}---We thank Jonathan Oppenheim, Richard Jozsa, Fernando Brand\~{a}o for insightful comments, Eleanor Rieffel and Howard Wiseman for stimulating correspondence and Jens Eisert and Michael Bremner for helpful discussions. MJH acknowledges funding from CHIST ERA (DIQIP), JJW was supported by the IARPA MQCO program and by the ARC via EQuS project number CE11001013, HA and NU are funded by the EPSRC, RR acknowledges support from NSERC, MITACS and Cifar and DEB is supported by the Leverhulme Trust.

\newpage
\clearpage
\begin{appendix}

\section{Some computational complexity definitions}\label{appsec0}

It is important that we define what we mean by ``classical computing devices". By this expression we include general probabilistic processes since quantum circuits will, in general, not give deterministic outcomes and we want to simulate the statistics of quantum circuits. We follow \cite{IQP2} in making the following definition (where Boolean circuits are defined in \cite{papabook}):

\begin{definition}
A \textbf{standard classical computing device} takes an input $x\in\{0,1\}^{n}$ of size $n$ and produces a bit string $y\in\{0,1\}^{s}$ of size $s$ by performing the following operations:
\begin{enumerate}
\item Flip $r_{1}$ fair coins to produce a random bit string $z\in\{0,1\}^{r_{1}}$ of size $r_{1}$ where we make the assignment ``heads" to $0$ and ``tails" to $1$;
\item Prepare the bit string $x'=(x,\tilde{0},z)$ where $\tilde{0}$ is the bit string of size $r_{2}$ with all elements equal to $0$; and
\item Apply a Boolean circuit $B_{n}$ that takes bit string $x'$ of size $n+r_{1}+r_{2}$ as an input and outputs the bit string $y\in\{0,1\}^{s}$ of size $s$, where the description of $B_{n}$ is generated in poly($n$) by a classical Turing Machine.
\end{enumerate}
All variables $n$, $s$, $r_{1}$, $r_{2}$ are positive integers. If the computational resources of this device are polynomial in $n$ then i.e. $s=\text{poly}(n)$, $r_{1}=\text{poly}(n)$, and $r_{2}=\text{poly}(n)$.
\end{definition}

There is a classical computational complexity class associated with these devices if the computational resources they use are polynomial in $n$, and this is the class of decision problems known as BPP. Since these classical computing devices involve nondeterministic processes, the correct answer to a decision problem may not be obtained deterministically; there is a probability of giving the wrong answer. A problem is in BPP if this error probability is less than or equal to some constant $c<1/2$. A related and potentially larger complexity class of probabilistic classical computations is PP where the error probability is bounded from above by $1/2$ but may not be a constant, i.e. can depend on the input size $n$. This last class will also be important in later discussion.

Similar to BJS \cite{IQP2} and AA \cite{aa}, our proof makes use of complexity classes defined using \emph{post-selection}. Post-selection cannot be achieved deterministically in a physical realization, but is an extremely useful technical concept. Post-selection is the act of demanding that the outcome of a quantum measurement is a fixed value, and that the state of the system then evolves via a fixed projector and a renormalization. Aaronson introduced the class $\textsc{PostBQP}$, which, loosely speaking, represents the problems solvable on a quantum computer given polynomial resources if we were given the extra power of post-selecting measurement outcomes. BJS introduced the complexity class $\textsc{PostIQP}$, which applies a similar treatment to their uniform IQP circuits.

The formal definitions of the classes $\textsc{PostBQP}$, $\textsc{PostIQP}$ and $\textsc{PostIQP}^{*}$ are as follows.

\begin{definition}
A \textbf{BQP circuit family} $\{C_{n}:n\in\mathbb{N}\}$ is a set of quantum circuits such that for each input bit string $x$ of size $n\in\mathbb{N}$, $C_{n}$ is a quantum circuit acting on $q=\text{poly}(n)$ qubits (initiated in the state $|x\rangle|0\rangle^{\otimes{q-n}}$) with a sequence of gates chosen from the universal gate set $\{CZ,H,Z,P\}$ which has a description generated in $\text{poly}(n)$ time by a classical Turing Machine.
\end{definition}

Here $CZ$ is the controlled-$Z$ gate, $H$ is the Hadamard gate, $Z$ is the Pauli-$Z$ gate and $P=e^{i\frac{\pi}{8}Z}$. We now define the aforementioned complexity classes under post-selection utilizing the definitions above and following the work of Aaronson in \cite{aaronson3}.

\begin{definition}\label{postbqp} A language $L\subseteq\{0,1\}^{*}$ is in $\textsc{PostBQP}$ iff there exists a BQP circuit family $\{C_{n}|n\in\mathbb{N}\}$ such that for all inputs $x\in\{0,1\}^{n}$:
\begin{enumerate}
\item after $C_{n}$ is applied to the state $|x\rangle|0\rangle|0\rangle...|0\rangle$, there is a non-zero probability that $a$ qubits (excluding the last qubit) at the end of the circuit are in the state $|0\rangle^{\otimes a}$ for $a=poly(n)$;
\item if $x\in L$ and these $a$ qubits are in the state $|0\rangle^{\otimes a}$, the last qubit when measured in the computational basis is $|1\rangle$ with probability $\geq2/3$;
\item if $x\notin L$ and these $a$ qubits are in the state $|0\rangle^{\otimes a}$, the last qubit when measured in the computational basis is $|1\rangle$ with probability $\leq1/3$;
\end{enumerate}
\end{definition}
\noindent where the last qubit is the bottom qubit line in a quantum circuit diagram such as in Fig. \ref{fig:egIQP}. These are postselected circuits because we postselect on a number of qubits all being in the state $|0\rangle$, and then accept the outcome of a measurement on the last qubit. Therefore, the set of qubits upon which we postselect never includes the last qubit.

To define $\textsc{PostIQP}$, BJS use a near identical definition, replacing ``a BQP circuit family" with their definition of ``a uniform IQP circuit family" in Definition \ref{postbqp}. Similarly, we define $\textsc{Post\IQPstar}$ by replacing, ``a BQP circuit family" in this definition by ``an {\IQPstar} circuit family".

\section{Proof of Lemma \ref{lemma1}}\label{appsec1}

In this section we provide a proof of Lemma \ref{lemma1}. This section utilizes some of the technical concepts defined in Appendix \ref{appsec0}. To prove Lemma \ref{lemma1} we use the arguments {presented} in \cite{IQP2}. 

One of the key lemmas which underpin BJS's main result (see Corollary 3.3 in \cite{IQP2}) is the complexity class equation:
\begin{lemma}
\textsc{PostIQP} = \textsc{PostBQP} = \textsc{PP}
\end{lemma}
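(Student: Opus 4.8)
\emph{Proof plan.} I would prove the two equalities separately. The equality $\textsc{PostBQP}=\textsc{PP}$ is precisely Aaronson's theorem \cite{aaronson3}, which I would simply invoke. For $\textsc{PostIQP}=\textsc{PostBQP}$ I would establish the two inclusions. The inclusion $\textsc{PostIQP}\subseteq\textsc{PostBQP}$ is essentially immediate: an IQP circuit is in particular a polynomial-size quantum circuit, so each gate $D(\theta_z,z)=e^{i\theta_z X[z]}$ can be realised (exactly, or to within inverse-exponential error, which is harmless for a postselected class) over the universal set $\{CZ,H,Z,P\}$ of Definition \ref{postbqp}'s BQP family using at most polynomially many ancilla gates, and the postselection structure carries over unchanged. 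All the content is therefore in the reverse inclusion $\textsc{PostBQP}\subseteq\textsc{PostIQP}$.

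For $\textsc{PostBQP}\subseteq\textsc{PostIQP}$ I would show that postselected IQP circuits simulate postselected BQP circuits, using measurement-based quantum computation as the bridge. Given a BQP circuit family $\{C_n\}$, compile each $C_n$ into a one-way-model pattern on a cluster state \cite{mbqc,gbriegel} in which every qubit is measured in the $X$--$Y$ plane; the computational-basis readout of $C_n$'s output register is recovered from the outcomes, extending each logical output wire by a single cluster qubit so that this readout too is an $X$--$Y$-plane measurement. Such a pattern uses only (i) qubits prepared in $\ket{+}$, (ii) $CZ$ entangling gates, and (iii) $X$--$Y$-plane measurements at angles $\theta$ of poly-size description; and an $X$--$Y$-plane measurement at angle $\theta$ equals ``apply $e^{i\theta Z}$, then measure in the $X$ basis''. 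Conjugating the entire pattern by $H\nt{q}$ exactly as in the proof of Lemma \ref{newnewlemma} (using $H^2=\openone$) turns the $\ket{+}$ preparations into the input $\ket{0}\nt{q}$ of Definition 1, turns the $CZ$ gates and the $e^{i\theta Z}$ rotations into $X$-diagonal gates $D(\theta_z,z)$ with $z$ a one- or two-bit string, and turns the $X$-basis measurements into computational-basis measurements. Since all the $D$-gates commute, the result is a bona fide IQP circuit.

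The one remaining obstacle is adaptivity: in the one-way model each measurement angle must be sign-flipped or $\pi$-shifted according to earlier outcomes (Pauli byproduct propagation), which would destroy the commuting structure. The fix is postselection. For every measured qubit of the pattern one postselects the outcome that induces no byproduct; then no correction is ever needed, every nominal angle $\theta$ is already the correct one, and the pattern stays non-adaptive with commuting gates. One additionally postselects, on the corresponding readout outcomes, the event that the $a$ flag qubits of Definition \ref{postbqp} are in $\ket{0}$. I would then check: (1) this combined postselected event has nonzero probability --- each single-qubit measurement in the pattern has both outcomes occurring with positive probability (unmeasured qubits of a connected graph-state pattern remain maximally mixed, giving the byproduct-trivial event probability $2^{-N}$ with $N$ the number of measured qubits), and conditioned on the byproduct-trivial outcomes the flag register has the same distribution as in $C_n$, where $\ket{0}^{\otimes a}$ has nonzero probability by hypothesis; (2) conditioned on the combined event, the ``last qubit'' carries exactly the distribution of the postselected output of $C_n$, so the accept/reject probabilities stay inside the $2/3$--$1/3$ gap and the same language is decided. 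This gives $\textsc{PostBQP}\subseteq\textsc{PostIQP}$ and closes the chain. The step I expect to be delicate is precisely this reverse inclusion, and within it the bookkeeping of (1)--(2): verifying that postselecting away all byproducts genuinely leaves a non-adaptive commuting circuit with strictly positive success probability and undistorted acceptance probabilities.

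Finally, I would observe that the one-way pattern (cluster graph plus list of angles) is produced in $\mathrm{poly}(n)$ time from the $\mathrm{poly}(n)$-time description of $C_n$, hence from $n$ alone; translating it into a list $L_n$ of $z$-strings with $n$-dependent, poly-size angles shows the stronger statement $\textsc{PostBQP}\subseteq\textsc{Post\IQPstar}$. Combined with the trivial $\textsc{Post\IQPstar}\subseteq\textsc{PostIQP}$ this upgrades the claim to $\textsc{Post\IQPstar}=\textsc{PostIQP}=\textsc{PostBQP}=\textsc{PP}$, which is the form actually needed in the proof of Lemma \ref{lemma1}.
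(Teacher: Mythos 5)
Your proposal is correct and takes essentially the same route as the paper: the paper imports $\textsc{PostBQP}=\textsc{PP}$ from Aaronson and $\textsc{PostIQP}=\textsc{PostBQP}$ from Bremner--Jozsa--Shepherd, and its own appendix proof of the strengthened statement $\textsc{Post\IQPstar}=\textsc{PostBQP}=\textsc{PP}$ argues exactly as you do --- compile the BQP circuit into a cluster-state MBQC pattern, use postselection to eliminate the adaptive byproduct corrections, and conjugate by Hadamards so the non-adaptive pattern becomes an $X$-diagonal (IQP) circuit with the trivial inclusion in the other direction. Your closing upgrade to $\textsc{Post\IQPstar}$ is precisely the form the paper actually needs for Lemma~\ref{lemma1}.
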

\noindent where $\textsc{PostIQP}$, $\textsc{PostBQP}$ and PP are defined in {Appendix} \ref{appsec0} and {in} \cite{IQP2}. The right-hand equation is due to Aaronson \cite{aaronson3}.
 
To prove {Lemma} \ref{lemma1}, we need to show that{:} 
\begin{lemma}\label{newlemma}
\textsc{Post\IQPstar}=\textsc{PostBQP}=\textsc{PP}.
\end{lemma}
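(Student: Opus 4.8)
The plan is to prove the chain of equalities $\textsc{Post\IQPstar}=\textsc{PostBQP}=\textsc{PP}$ by establishing the two containments $\textsc{Post\IQPstar}\subseteq\textsc{PostBQP}$ and $\textsc{PostBQP}\subseteq\textsc{Post\IQPstar}$, since the equality $\textsc{PostBQP}=\textsc{PP}$ is already Aaronson's theorem (quoted as part of Lemma 5). The first containment is essentially trivial: an {\IQPstar} circuit family is by definition a restricted kind of quantum circuit family with a classical $\mathrm{poly}(n)$-time description (the list $L_n$ of strings and angles), the gates $D(\theta_z,z)=e^{i\theta_z X[z]}$ can each be compiled into $O(|z|)$ gates from $\{CZ,H,Z,P\}$ up to a fixed (say, inverse-polynomial or constant) precision — which is all that is needed since the acceptance gap is $2/3$ versus $1/3$ — and the postselection discipline in the definition of $\textsc{Post\IQPstar}$ matches that of $\textsc{PostBQP}$ verbatim. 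So the real content is the reverse inclusion $\textsc{PostBQP}\subseteq\textsc{Post\IQPstar}$.

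For $\textsc{PostBQP}\subseteq\textsc{Post\IQPstar}$, I would follow the BJS strategy for $\textsc{PostIQP}$ but check carefully that the much stricter {\IQPstar} uniformity condition still suffices once postselection is available. The key observation (from BJS, ultimately going back to the structure of IQP / the ``$X$-program'' picture) is that \emph{postselection can simulate a universal adaptive gate set out of nonadaptive $X$-diagonal gates}. Concretely: take an arbitrary BQP circuit $C_n$ on $q=\mathrm{poly}(n)$ qubits using gates from $\{CZ,H,Z,P\}$. One conjugates the whole circuit into the $X$-basis picture and uses the standard gadget in which each non-$X$-diagonal gate (Hadamards, and the Clifford+$P$ structure) is implemented by introducing ancilla qubits, applying only commuting $X$-type rotations of the form $e^{i\theta X[z]}$ among the system and ancillas, measuring in the computational basis, and postselecting the ancilla outcomes on $0$ (or a fixed value) — this is exactly the one-bit teleportation / gate-teleportation gadget rephrased so that every gate actually applied is $X$-diagonal. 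Since all these gadget gates commute and are applied to a product input state, the resulting circuit is genuinely an IQP circuit; and because $C_n$'s own description is generated in $\mathrm{poly}(n)$ by a Turing machine, the corresponding list $L_n$ of $(z,\theta_z)$ pairs — fixed angles like $\pi/4$, $\pi/8$, $\pi/2$ and location strings $z$ determined by the gate sequence — is also generated in $\mathrm{poly}(n)$, so the family is {\IQPstar}-uniform (depending only on $n$, not on $x$). The postselection on the ancillas is folded into the postselected qubits of the $\textsc{Post\IQPstar}$ definition, and the designated output qubit of $C_n$ becomes the last qubit.

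The step I expect to be the main obstacle — and the place where this lemma genuinely differs from BJS rather than merely restating them — is verifying that the \emph{weaker} ``pre-computation'' allowed by {\IQPstar} uniformity is not secretly needed by the BJS construction. BJS's uniformity lets the circuit depend on the actual input string $x$, which in their proof can smuggle in arbitrary $\mathrm{poly}$-size Boolean processing of $x$ before the quantum part; I must show this is inessential once postselection is in play. The resolution is Eq.~\eqref{mainequation} together with the remark after Corollary~1: for an IQP circuit the entire $x$-dependence is equivalent to the classical post-processing $m\mapsto m\oplus\bar x$ of the output, so an $\textsc{Post\IQPstar}$ machine can absorb the input into the (free) classical side-processing and need only a length-$n$-dependent circuit. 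Hence I would: (i) reduce $C_n$ acting on $\ket{x}\ket{0}^{\otimes(q-n)}$ to the zero-input version by Eq.~\eqref{mainequation}; (ii) conjugate by $H^{\otimes q}$ and insert $CZ^2=\openone$, $H^2=\openone$ as in the proof of Lemma~\ref{newnewlemma} so that everything becomes $X$-diagonal rotations $e^{i\theta_z X[z]}$ on a product state; (iii) replace the non-commuting structure of a universal gate set by commuting $X$-rotations plus postselected ancillas via the teleportation gadget; (iv) observe the resulting angle/string list is Turing-generated in $\mathrm{poly}(n)$, giving an {\IQPstar} family; (v) match up the postselection and output conventions. Then $\textsc{PostBQP}\subseteq\textsc{Post\IQPstar}\subseteq\textsc{PostBQP}$ closes the loop, and Aaronson's $\textsc{PostBQP}=\textsc{PP}$ completes Lemma~\ref{newlemma}; Lemma~\ref{lemma1} then follows from Lemma~\ref{newlemma} by the standard argument (an exact efficient classical sampler for {\IQPstar} output distributions would, via Stockmeyer/BJS, put $\textsc{PP}=\textsc{Post\IQPstar}$ inside $\mathrm{BPP}^{\mathrm{NP}}\subseteq\Sigma_3^p$, collapsing the hierarchy and contradicting Hypothesis~1).
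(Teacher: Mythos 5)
Your proposal is correct and takes essentially the same route as the paper: the paper proves $\textsc{PostBQP}\subseteq\textsc{Post\IQPstar}$ by realizing an arbitrary BQP circuit in MBQC (cluster-state preparation plus single-qubit measurements), using postselection to remove the adaptivity, and conjugating by Hadamards so that only commuting $X$-diagonal gates acting on $\ket{x}\ket{0}^{\otimes(q-n)}$ remain, with a description depending only on $n$ --- which is the same mechanism as your postselected gate-teleportation gadgets, and the easy inclusion and Aaronson's $\textsc{PostBQP}=\textsc{PP}$ are handled identically. The only quibble is your step (i): the identity \eqref{mainequation} applies only after the circuit has been made $X$-diagonal, and the zero-input reduction is in any case unnecessary, since the $\textsc{Post\IQPstar}$ definition already lets $x$ enter through the initial state (in the paper the $\ket{+}/\ket{-}$ preparations simply become $\ket{x_j}$ after the Hadamard conjugation), so the circuit itself never needs to depend on $x$.
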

Lemma 1 then follows directly from all other steps of the proof of {Corollary} 3.3 in \cite{IQP2}. We shall not reproduce those steps of the proof here.

{In order to prove $\textsc{Post\IQPstar}=\textsc{PostBQP}$, it is necessary to prove that $\textsc{Post\IQPstar}\subseteq\textsc{PostBQP}$ and that $\textsc{PostBQP}\subseteq\textsc{Post\IQPstar}$. It is clear that the former is true. To prove the latter, recall that any BQP circuit can be expressed in the formalism of MBQC \cite{RBB}.} The physical realization of any BQP circuit family {in {MBQC}} comprises the generation of a graph state of sufficient size and appropriate structure, {followed by} adaptive single-qubit measurements in the bases $X$, $Y$ and  $(X\pm Y)/\sqrt{2}$.

Graph state generation comprises{:}
\begin{enumerate}
\item preparation of a set of qubits in state $\ket{+}=(\ket{0}+\ket{1})/\sqrt{2}${; and}
\item implementation of CZ gates between certain {qubits.}
\end{enumerate}

The measurements can be implemented by rotating about the $Z$ axis by $0$, $\pi/4$, $-\pi/4$ or $\pi/2$ followed by a measurement in the $X$ basis.

The sequence of measurements corresponds to the chosen BQP circuit. Notice that from Definition 6  the initial state and the gates depend on $n$ and not the specific value of $x$. The dependence on $x$ is introduced by having measurements of the following form (via a simple application of gate identities in \cite{RBB}): there are $n$ qubits that each correspond to each element $x_{j}$ of $x$, and if $x_{j}=0$ we make the measurement in the $X$ basis, and if $x_{j}=1$ then we implement a $\pi$-rotation about the $Z$ axis and measure in the $X$ basis. Since the rotation commutes with the CZ gates, it can be incorporated into the preparation of that qubit by preparing $\ket{+}$ if $x_{j}=0$ and $\ket{-}=\frac{1}{\sqrt{2}}\left(|0\rangle-|1\rangle\right)$ otherwise. For the rest of the qubits in the graph state, the rotation about the $Z$ axis and measurement in the $X$ basis is only defined by size $n$, the size of $x$.

If we could post-select on measurement outcomes in MBQC then we remove the need for adaptive measurements---computations are accepted only if certain outcomes occur. We now show that non-adaptive computations in MBQC are instances of {\IQPstar} circuits. This can be easily shown if one writes out the computation in MBQC as a quantum circuit, and for all qubits after preparation of states $\ket{+}$ and $\ket{-}$ and before measurements in the $X$ basis two Hadamard gates are applied, yielding exactly the same circuit. The action of a Hadamard on these state preparations gives the mapping $\frac{1}{\sqrt{2}}(|0\rangle+(-1)^{x_{j}}|1\rangle)\rightarrow|x_{j}\rangle$. In a computation in MBQC all unitaries prior to measurement are diagonal in the $Z$ basis, so the action of a Hadamard on every qubit prior to and after these unitaries maps these unitaries diagonal in the $Z$ basis to unitaries that are diagonal in the $X$ basis; those unitaries that appear in {\IQPstar} circuits. Finally the action of a Hadamard prior to measurement in the $X$ basis results in a measurement in the computational basis. In Fig. \ref{fig:dualcircuits} we give an example of this equivalence between MBQC circuits without adaptivity and IQP circuits.

\begin{figure}[h!]
\centering
	\includegraphics[width=0.3\textwidth]{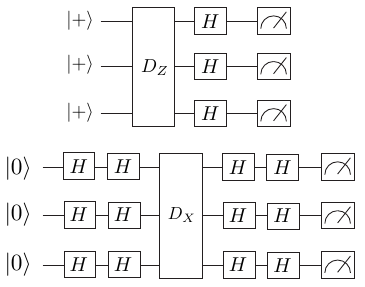}
\caption{The top circuit where $D_{Z}$ is a unitary that is diagonal in the Pauli-Z basis is equal to the bottom circuit where $D_{X}$ is a unitary that is diagonal in the Pauli-X basis. The top circuit is of the form of a circuit in MBQC but without adaptivity. Since $H\cdot H=\openone$ we immediately see that the bottom circuit is an example of an IQP circuit.}
\label{fig:dualcircuits}
\end{figure}

Therefore, under the action of post-selection MBQC computations are instances of {\IQPstar} circuits. To complete the proof of Lemma \ref{lemma1} we insert Lemma \ref{newlemma} into the proofs of Theorem 3.2 in \cite{IQP2} and Corollary 3.3 in \cite{IQP2}.

\end{appendix}

\end{document}